\newtheorem{Theorem}{Theorem}[section]
\newtheorem{Proposition}[Theorem]{Proposition}
\newtheorem{Lemma}[Theorem]{Lemma}
\newtheorem{Corollary}[Theorem]{Corollary}
\theoremstyle{definition}
\newtheorem{Definition}[Theorem]{Definition}
\newtheorem{Remark}[Theorem]{Remark}
\newcommand{\bfom}{\mbox{\boldmath $\varpi$}}
\newcommand{\bTheorem}[1]{
\begin{Theorem} \label{T#1} }
\newcommand{\eT}{\end{Theorem}}
\newcommand{\bProposition}[1]{
\begin{Proposition} \label{P#1}}
\newcommand{\eP}{\end{Proposition}}
\newcommand{\bLemma}[1]{
\begin{Lemma} \label{L#1} }
\newcommand{\eL}{\end{Lemma}}
\newcommand{\bCorollary}[1]{
\begin{Corollary} \label{C#1} }
\newcommand{\eC}{\end{Corollary}}
\newcommand{\bRemark}[1]{
\begin{Remark} \label{R#1} }
\newcommand{\eR}{\end{Remark}}
\newcommand{\bDefinition}[1]{
\begin{Definition} \label{D#1} }
\newcommand{\eD}{\end{Definition}}
\newcommand{\Del}{\Delta_x}
\DeclareMathOperator{\supp}{supp}
\newcommand{\bfphi}{\boldsymbol{\varphi}}
\newcommand{\bfeta}{\boldsymbol{\eta}}
\newcommand{\bFormula}[1]{
\begin{equation} \label{#1}}
\newcommand{\eF}{\end{equation}}
\newcommand{\Ov}[1]{\overline{#1}}
\newcommand{\vr}{\varrho}
\newcommand{\vu}{\bm{u}}
\newcommand{\vm}{\bm{m}}
\newcommand{\vn}{\bm{n}}
\newcommand{\vF}{\bm{F}}
\newcommand{\pat}{\partial_t}
\newcommand{\vV}{\vu^S}
\newcommand{\vomega}{\boldsymbol{\omega}}
\newcommand{\veta}{\boldsymbol{\eta}}
\newcommand{\vc}[1]{{\bm #1}}
\newcommand{\Div}{{\rm div}_x}
\newcommand{\Grad}{\nabla_x}
\newcommand{\dx}{\,{\rm d} {x}}
\newcommand{\dt}{\,{\rm d} t }
\newcommand{\dxdt}{\dx \ \dt}
\newcommand{\intO}[1]{\int_{\Omega} #1 \ \dx}
\renewcommand{\vm}{\bm{m}}
\renewcommand{\vu}{\bm{u}}
\def\softd{{\leavevmode\setbox1=\hbox{d}%
          \hbox to 1.05\wd1{d\kern-0.4ex{\char039}\hss}}}
\definecolor{Cgrey}{rgb}{0.85,0.85,0.85}
\definecolor{Cblue}{rgb}{0.50,0.85,0.85}
\definecolor{Cred}{rgb}{1,0,0}
\definecolor{fancy}{rgb}{0.10,0.85,0.10}
\newcommand\Cbox[2]{%
    \newbox\contentbox%
    \newbox\bkgdbox%
    \setbox\contentbox\hbox to \hsize{%
        \vtop{
            \kern\columnsep
            \hbox to \hsize{%
                \kern\columnsep%
                \advance\hsize by -2\columnsep%
                \setlength{\textwidth}{\hsize}%
                \vbox{
                    \parskip=\baselineskip
                    \parindent=0bp
                    #2
                }%
                \kern\columnsep%
            }%
            \kern\columnsep%
        }%
    }%
    \setbox\bkgdbox\vbox{
        \color{#1}
        \hrule width  \wd\contentbox %
               height \ht\contentbox %
               depth  \dp\contentbox
        \color{black}
    }%
    \wd\bkgdbox=0bp%
    \vbox{\hbox to \hsize{\box\bkgdbox\box\contentbox}}%
    \vskip\baselineskip%
}
\date{}
\begin{document}


\title{On the motion of rigid bodies in a perfect fluid}

\author{Eduard Feireisl\thanks{The research of E.F. and V.M. leading to these results has received funding from the
Czech Sciences Foundation (GA\v CR), Grant Agreement
18--05974S. The Institute of Mathematics of the Academy of Sciences of
the Czech Republic is supported by RVO:67985840.} $^{\spadesuit}$ $^\clubsuit$
\and V\' aclav M\' acha $^{\spadesuit}$
}

\date{\today}

\maketitle

\bigskip

\centerline{$^{\spadesuit}$ Institute of Mathematics of the Academy of Sciences of the Czech Republic}
\centerline{\v Zitn\' a 25, CZ-115 67 Praha 1, Czech Republic}
\centerline{feireisl@math.cas.cz}

\bigskip

\centerline{$^\clubsuit$ Institute of Mathematics, TU Berlin}
\centerline{Strasse des 17. Juni, Berlin, Germany}

\begin{abstract}

We consider the problem of motion of several rigid bodies immersed in a perfect compressible fluid. Using the method of convex 
integration we establish the existence of infinitely many weak solutions with {\it a priori} prescribed motion of rigid bodies.
In particular, the dynamics is completely \emph{time--reversible} at the motion of rigid bodies  although the solutions 
comply with the standard entropy admissibility criterion.

\end{abstract}

{\bf Keywords:} Fluid--structure interaction, compressible Euler system, convex integration


\section{Introduction}
\label{i}

The motion of one or more rigid objects immersed in a fluid is an example of \emph{fluid--structure interaction problem} in continuum 
mechanics. We focus on the case of perfect (inviscid) fluid contained in a bounded cavity $\Omega \subset R^d$, $d=2,3$. As is well--known, neglecting completely the effect of viscosity leads to unphysical conclusions among which the best known is 
the celebrated D'Alembert paradox: Both drag and lift vanish in a potential inviscid incompressible fluid flow. We consider a more realistic situation of a perfect \emph{compressible} fluid and show that the initial--value problem 
for the associated Euler system
is essentially ill--posed in the class of weak solutions. As weak solutions are indispensable in gas dynamics, where shock waves develop in finite time, the result suggests the model based on the Euler system calls for a thorough revision. 

\subsection{Rigid body motion}

The position of rigid bodies at a time $t \in [0,T]$ is represented by compact sets $B_i(t) \subset \Omega \subset R^d$, $i=1,\dots,N$.
The mass distribution in the bodies is determined by the density $\vr^S = \vr^S(t,x)$, 
\begin{equation} \label{i1}
\vr^S(t,x) = \left\{ \begin{array}{l} \vr^S_i (t,x) \geq 0, \ \mbox{if}\ x \in B_i(t), \ \vr^S_i \not\equiv 0 
\ \mbox{in}\ B_i(t)\\ \\ 
0 \ \mbox{otherwise.} 
\end{array} \right. 
\end{equation}  
We denote 
\begin{equation} \label{i2}
\begin{split}
m_i &\equiv \int_{B_i(t)} \vr^S \ \dx \ \mbox{-- the total mass}, \\  
x_{B_i} (t) &= \frac{1}{m_i} \int_{B_i(t)} \vr^S(t,x) x \ \dx \ \mbox{-- the barycenter of the body}\ B_i
\ \mbox{at a time}\ t \in [0,T].
\end{split}
\end{equation}

The motion of the rigid bodies is described through a velocity field 
$\vu^S = \vu^S(t,x):[0,T] \times \Ov{\Omega} \to R^d$, 
\begin{equation}\label{i3}
\Div \vV = 0,\ \vu^S \cdot \vc{n}|_{\partial \Omega} = 0,\  
\vu^S(t,x) = \veta_i(t) - \vomega_i(t)\times (x - x_{B_i}(t)) \ \mbox{if}\ x \in B_i(t).
\end{equation}
The field $\vu^S$ generates a flow map
$\vc{X}$, 
\[
\frac{{\rm d}}{{\rm d}t} \vc{X}(t, \vc{X}_0) = \vu^S(t, \vc{X}), \ \vc{X}(0, \vc{X}_0) = \vc{X}_0 \in \Ov{\Omega}.
\]
In accordance with \eqref{i3}, 
\begin{equation} \label{i4}
\vc{X}(t, \cdot): B_i \to B_i(t) \ \mbox{is an isometry for any}\ t \geq 0, \ i = 1,\dots, N.
\end{equation}
In addition, we set 
\[
\vr^S(t,\vc{X}(t,x)) = \vr^S(0,x) \ \mbox{for any}\ x \in \Ov{\Omega}, 
\]
meaning $\vr^S$ satisfies the equation of continuity 
\begin{equation} \label{i5}
\partial_t \vr^S + \Div (\vr^S \vu^S ) = 0 
\end{equation}
in the sense of distributions.

The time evolution of the velocity field is governed by Newton's second law. If $d=3$, introducing the inertial tensor 
\[
\mathbb{J}^i \cdot \vc{a} \cdot \vc{b} \equiv 
\int_{B^i(t)} \vr^S(t,x) \Big[ \vc{a} \times \left( x - x_{B_i(t)} \right) \Big] \cdot \Big[ \vc{b} \times \left( x - x_{B_i(t)} \right) \Big] \ \dx,
\]
we can write the momentum equation in the form (see Galdi \cite{galdi} or Houot, San Martin, and Tucsnak \cite{HoMaTu})
\begin{equation} \label{i6}
m_i \frac{{\rm d}}{{\rm d}t} \veta_i(t) =  - \int_{\partial B_i(t)} \mathbb{T} \cdot \vc{n} \ {\rm d}S_x + 
\int_{B_i(t)} \vr^S \vc{g} \ \dx,
\end{equation}
\begin{multline} \label{i7}
\mathbb{J}^i \cdot \frac{{\rm d}}{{\rm d}t} \vomega_i(t) = [ \mathbb{J}^i \cdot \vomega_i(t)] \times \vomega_i(t)\\ 
-\int_{\partial B_i(t)} \Big[ x - x_{B_i(t)} \Big] \times \Big[ \mathbb{T} \cdot \vc{n} \Big] \ {\rm d}S_x +
\int_{B_i(t)} \vr^S(t,x) \Big[ x - x_{B_i(t)} \Big] \times \vc{g} \ \dx.
\end{multline}
Here $\mathbb{T}$ is the total Cauchy stress acting on the body and $\vc{g}$ is a given external body force. In the real world applications, $\vc{g}$ is the gravitational force. If the bodies are immersed in a perfect fluid, the tensor 
$\mathbb{T}$ reduces to 
\begin{equation} \label{i8} 
\mathbb{T} = - p_F \mathbb{I},
\end{equation}
where $p_F$ is the fluid pressure.

In the case $d = 2$ the proper equations can be deduced by using appropriate projection. The momentum equation \eqref{i6} remains the same, however, the inertia tensors $\mathbb J^i$ have only one component and this can be computed as
\begin{equation}
\mathbb J^i = \int_{B^i(t)} \vr^S(t,x) |x-x_{B_i(t)}|^2 \ \dx.
\end{equation}
Then, rotation is represented just by a scalar quantity $\omega$ and \eqref{i7} is replaced by
\begin{equation}
\mathbb J^i \frac{\rm d}{{\rm d}t} \omega_i(t) = -\int_{\partial B_i(t)} (x-x_{B_i(t)})\cdot (\mathbb T \vc{n})^\bot\ {\rm d}S_x + \int_{B_i(t)} \vr^S(t,x) (x-x_{B_i(t)}) \cdot \vc{g}^\bot \ \dx.
\end{equation}
See also Ortega, Rosier, and Takahashi \cite{OrRoTa}.

\subsection{Fluid motion}

Neglecting the thermal effect, we suppose that the time evolution of the fluid density $\vr^F = \vr^F(t,x)$ 
and the fluid velocity $\vu^F = \vu^F(t,x)$ is governed by the barotropic Euler system
\begin{equation} \label{i9}
\partial_t \vr^F + \Div (\vr^F \vu^F ) = 0,
\end{equation} 
\begin{equation} \label{i10}
\partial_t (\vr^F \vu^F) + \Div (\vr^F \vu^F \otimes \vu^F) + \Grad p (\vr^F) = \vr^F \vc{g}.
\end{equation}

The equations are satisfied in the fluid domain 
\[
Q_F = \left\{ t \in (0,T),\ x \in \Omega_F (t)\ \Big| \ \Omega_F(t) = \Omega \setminus \cup_{i = 1}^N B_i(t) \right\}.
\]
Finally, we impose the impermeability boundary conditions 
\begin{equation} \label{i11}
\vu^F \cdot \vc{n} = \vu^S \cdot \vc{n} |_{\partial \Omega_F},
\end{equation}
where $\vu^S$ is the velocity field governing the motion of the bodies, cf. \eqref{i3}.

\subsection{Compatibility}

The motion of the rigid bodies is driven by the surrounding perfect fluid, if 
the Cauchy stress $\mathbb{T}$ satisfies 
\begin{equation} \label{i12}
\mathbb{T} = - p_F \mathbb{I},\ \mbox{where}\  p_F = p(\vr^F) \ \mbox{on} \ \partial B_i, \ i = 1,\dots, N.
\end{equation}
For relation \eqref{i10} to make sense, a notion of boundary trace of the fluid density $\vr^F$ is necessary. As the latter may not 
be continuous, we consider a larger class 
\[
\vr^F(t, \cdot) \in BV (\Omega_F(t)) \ \mbox{for}\ t \geq 0
\]
where at least one--sided traces are available as soon as $\partial B_i$ is Lipschitz, cf. 
Evans and Gariepy \cite{EVGA}.

\bigskip

As is well known, smooth solutions of the Euler system develop singularities in a finite time for a fairly general class of initial data.
Our goal is therefore to consider the fluid--structure interaction problem in the framework of weak solutions. 
We show that the corresponding initial--value problem admits infinitely many solutions
for any {\it a priori} given admissible motion of the rigid objects. By admissible motion we mean the motion of the rigid objects 
governed by the velocity field satisfying \eqref{i6}, \eqref{i7} with
\[
\mathbb{T} = - p_F = - p(\vr^F) \ \mbox{for a certain}\ \vr^F \in C^1(\Ov{Q}_F),\ \int_{\Omega_F(t)} \vr^F (t,x) \ \dx 
= m^F = {\rm const}\ \mbox{for any}\ t \in [0,T].
\] 
Although, apparently, not any given motion is admissible - the classical example is a ball $B$ with homogeneous density 
distribution for which the barycenter coincides with the geometric center and there is no way how to control the rotational 
velocity $\vomega$ - our result implies certain reversibility of the rigid body evolution. In addition, there are always infinitely 
many solutions (density and velocity od the fluid) giving rise to the same body motion. The
result is proved by the abstract machinery 
of convex integration developed in \cite{Fei2016}. The crucial observation 
is that the ``incompressible'' convex integration technique developed in earlier work of De Lellis and Sz\' ekelyhi 
\cite{DelSze3} can be adapted to the compressible Euler system with an {\it a priori} given density, cf. \cite{Fei2016}.

The paper is organized as follows. In Section \ref{M}, we introduce the concept of weak solution to the fluid--structure interaction problem and state our main result. In Section \ref{R}, we reformulate the problem to fit the abstract framework of convex integration.
In Section \ref{C}, we apply the method of convex integration within the framework developed in \cite{Fei2016} to show the existence of 
infinitely many solutions for given rigid body motion. The paper is concluded by a discussion about physically relevant solutions 
in Section \ref{P} and concrete examples of admissible rigid bodies motion in Section \ref{pres.con}.

\section{Weak formulation, main results}

\label{M}

We start by introducing the concept of weak solution to the fluid--structure interaction problem \eqref{i1}--\eqref{i12}. For the sake of simplicity, we omit the effect of volume forces setting $\vc{g} \equiv 0$. 

\begin{Definition}[{\bf Weak solution}] \label{MD1}

The velocity quantity $[\vr, \vu, \{ B_i \}_{i=1}^N]$ is a \emph{weak solution} of the problem \eqref{i1}--\eqref{i12} 
with the initial data $[\vr_0, \vm_0, \{ B_{i,0} \}_{i = 1}^N ]$
if the following holds:

\begin{itemize}

\item {\bf Integrability.}

\[
\vr \in L^\infty((0,T) \times \Omega),\ \vu \in L^\infty((0,T) \times \Omega; R^d),
\]
\[
0 < \underline{\vr} \leq \vr(t,x) \leq \Ov{\vr}\ \mbox{for a.a}\ (t,x) \times \Omega.
\]

\item {\bf Mass conservation.}
The integral identity 
\begin{equation} \label{M1}
\int_0^T \intO{ \Big[ \vr \partial_t \varphi + \vr \vu \cdot \Grad \varphi \Big] } \dt = - \intO{ \vr_0 \varphi (0, \cdot) }
\end{equation}
holds for any $\varphi \in C^1_{\rm loc}([0,T) \times \Ov{\Omega})$.

\item {\bf Compatibility.}
There is a velocity field $\vu^S \in W^{1,\infty}([0,T] \times \Omega ; R^d)$, 
\[
\Div \vV = 0,\ \vu^S \cdot \vc{n}|_{\partial \Omega} = 0,\  
\vu^S(t,x) = \vu(t,x) = \veta_i(t) - \vomega_i(t)\times (x - x_{B_i}(t)) \ \mbox{if}\ x \in B_i(t),
\]
and a density $\vr^F \in L^\infty(Q_F)$ such that
\[
\vr = \vr^F \ \mbox{in}\ Q_F,\ 
\vr^F(t, \cdot) \in BV (\Omega_F(t)) \ \mbox{for a.a.}\ t \in (0,T).
\]

\item
{\bf Momentum balance.}
For
\[
\vr(t,x) = \left\{ \begin{array}{l} \vr^F(t,x) \ \mbox{if}\ t \in [0,T], \ x \in (\Omega \setminus \cup_{i = 1}^N B_i(t)),\\ \\ 
\vr^S(t,x) \ \mbox{if}\ t \in [0,T], \ x \in \cup_{i = 1}^N B_i(t), 
\end{array}
\right.
\]  
\[
\vu(t,x) = \left\{ \begin{array}{l} \vu^F(t,x) \ \mbox{if}\ t \in [0,T], \ x \in (\Omega \setminus \cup_{i = 1}^N B_i(t)),\\ \\ 
\vu^S(t,x) \ \mbox{if}\ t \in [0,T], \ x \in \cup_{i = 1}^N B_i(t), 
\end{array}
\right.
\]
it holds:
 
{\bf (i)} 
the velocity $\vu^S$ is determined by \eqref{i6}, \eqref{i7} on each $B_i$, with the initial momentum 
\[
\vr^S(0, \cdot) \vu^S(0, \cdot) = \vm_0 \ \mbox{in}\ \Omega \setminus \Omega_F(0), 
\]
the initial position of the rigid bodies is $B_{i,0}$, 
\[
B_i(0) = B_{i,0},\ i = 1, \dots, N;
\]

\noindent
{\bf (ii)} the integral identity
\begin{equation} \label{M2}
\int_0^T \int_{\Omega_F(t)} \Big[ \vr^F \vu^F \cdot \partial_t \bfphi + 
\vr^F \vu^F \otimes \vu^F : \Grad \bfphi + p(\vr^F) \Div \bfphi \Big] \dx \dt = - \int_{\Omega_F(0)} \vm_0
\bfphi(0, \cdot) \ \dx
\end{equation}
holds for any $\bfphi \in C^1_{\rm loc}([0,T) \times \Ov{\Omega}; R^d)$, $\bfphi(t, \cdot) \cdot \vc{n}|_{\partial \Omega_F(t)} = 0$ 
for any $t \in [0,T)$; 

\noindent
{\bf (iii)} the Cauchy stress $\mathbb{T}$ in \eqref{i6}, \eqref{i7} takes the form
\begin{equation} \label{M3}
\mathbb{T} = - p(\vr^F) \mathbb{I}. 
\end{equation}
\end{itemize}

\end{Definition}

For the trace of $\vr^F$ in \eqref{M3} to exist, we need certain regularity of the fluid domain: (i) $\partial B_i$ at least Lipschitz 
for any $i = 1,\dots, N$, (ii) $B_i(t) \cap B_j(t) = \emptyset$ whenever $i \ne j$ and for any $t \in [0,T]$. Note that the latter is true at any $t$ if it is true at $t = 0$ as the governing velocity field $\vu^S$ is globally Lipschitz.

Next, we introduce the concept of admissible motion.

\begin{Definition}[{\bf Admissible motion}] \label{MD2}

The motion of the rigid bodies $\{ B^S_i \}_{i = 1}^N$ determined 
through the density $\vr^S$ and the velocity $\vu^S\in W^{1,\infty}([0,T]\times\Omega, \mathbb R^d)$ is \emph{admissible} if there exists 
$\vr^F \in C^2([0,T] \times \Ov{\Omega})$, $\inf_{(0,T) \times \Omega} \vr^F > 0$ such that 
\begin{equation} \label{M2a}
\int_{\Omega_F(t)} \vr^F (t,\cdot) \ \dx = m^F > 0 \ \mbox{for any}\ t \in [0,T],
\end{equation}
and \eqref{i6}, \eqref{i7} are satisfied with 
\[
\mathbb{T} = - p(\vr^F) \mathbb{I}.
\]

\end{Definition} 

\begin{Remark} \label{MR2}

Note that the density $\vr^F$ is determined by the pressure $p_F$ only on the boundaries $\partial B_i$ of the rigid bodies. Its extension 
to $\Omega$, and, in particular, to its fluid part $Q_F$ is completely arbitrary as soon as the total mass of the fluid is conserved - 
condition \eqref{M2a}.

\end{Remark}

Finally, we introduce the class of initial data compatible with an admissible motion. 

\begin{Definition}[{\bf Compatible initial data}] \label{MD3}

Let $\left[\vr^S, \vu^S, \{ B^S_i \}_{i = 1}^N \right]$ be an admissible rigid body motion with the associated density $\vr^F$ in the sense of 
Definition \ref{MD2}. We say that the initial data $\left[\vr_0, \vm_0, \left\{ B_{i,0}  \right\}_{i = 1}^N \right]$ are 
\emph{compatible} with the motion $\left[\vr^S, \vu^S, \{ B^S_i \}_{i = 1}^N \right]$ if:
\begin{itemize}
\item
\[
B_{i,0} = B^S_i(0),\ i=1, \dots, N,\ 
\vr_0 = \left\{ \begin{array}{l} \vr^S(0, \cdot) \ \mbox{in}\ \cup_{i = 1}^N B_{i,0},\\ \\
\vr^F (0, \cdot) \ \mbox{in} \ \Omega \setminus \cup_{i = 1}^N B_{i,0} \end{array} \right. ;
\]

\item 
\[
\vm_0 = \vr^S(0, \cdot) \vu^S(0, \cdot) \ \mbox{in}\ \cup_{i = 1}^N B_{i,0} ;
\]

\item
$\vm_0$ restricted to the fluid domain takes the form
\begin{equation} \label{M4}
\vm_0 = \Grad \Phi_0 \ \mbox{in}\ \Omega_F (0) \equiv \Omega \setminus \cup_{i = 1}^N B_{i,0},
\end{equation}
where 
\begin{equation} \label{M5}
- \Del \Phi_0 = \partial_t \vr^F(0, \cdot),\ \Grad \Phi_0 \cdot \vc{n}|_{\partial \Omega_F (0) } = 
\vr^F(0, \cdot) \vu^S(0, \cdot) \cdot \vc{n}|_{\partial \Omega_F (0)}.
\end{equation}

\end{itemize}

\end{Definition} 

\begin{Remark} \label{MR1}

The necessary compatibility conditions for solvability of the Neumann problem \eqref{M5} 
is a consequence of the transport theorem:
\begin{equation} \label{M6}
0 = \frac{{\rm d}}{{\rm d}t} \int_{\Omega_F(t)} \vr^F(t, \cdot) \ \dx = \int_{\Omega_F(t)} \partial_t \vr^F \ \dx 
+ \int_{\partial \Omega_F(t)} \vr^F \vu^S \cdot \vc{n} \ {\rm dS}_x 
\end{equation}
evaluated at $t = 0$.

\end{Remark}

\begin{Remark} \label{MR3}

As pointed out in Remark \ref{MR2}, the density profile $\vr_F$, in particular its initial value $\vr_F(0, \cdot)$, is fixed 
by the rigid bodies motion
only on the boundaries of the rigid bodies.

\end{Remark}

Our main goal is to show the following result:

\begin{Theorem}[{\bf Existence of weak solutions}] \label{MT1}

Let $\Omega \subset R^d$, $d = 2,3$ be a bounded domain of class $C^2$. Let $B_{i,0} = \Ov{S}_i$ be given, 
where $S_i \subset B_{i,0} \subset \Omega$ are simply connected domains of class $C^2$, $i = 1, \dots, N$. Suppose that $\left[ 
\vr^S, \vu^S, \{ B^S_i \}_{i = 1}^N \right]$ is an admissible motion in the sense of Definition 
\ref{MD2}. 

Then for any initial data $\left[\vr_0, \vm_0, \left\{ B_{i,0}  \right\}_{i = 1}^N \right]$ compatible with 
$\left[ 
\vr^S, \vu^S, \{ B^S_i \}_{i = 1}^N \right]$, the fluid--structure interaction problem \eqref{i1}--\eqref{i12} admits infinitely many weak solutions $\left[ \vr, \vm, \left\{ B_{i}  \right\}_{i = 1}^N \right]$ 
in the sense of Definition \ref{MD1}, where 
\[
B_i(t) = B_i^S(t) \ \mbox{for all}\ t \in [0,T], \ \mbox{and any}\ i= 1,\dots, N.
\]

\end{Theorem}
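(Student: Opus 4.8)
The plan is to reduce Theorem \ref{MT1} to the abstract convex integration machinery of \cite{Fei2016} for the compressible Euler system with prescribed density. Since the rigid body motion $\left[\vr^S,\vu^S,\{B_i^S\}_{i=1}^N\right]$ is admissible, the sets $B_i(t)=B_i^S(t)$, the solid density $\vr^S$, the solid velocity $\vu^S$ and a fluid density profile $\vr^F\in C^2([0,T]\times\Ov{\Omega})$ with $\inf\vr^F>0$ and $\int_{\Omega_F(t)}\vr^F\,\dx=m^F$ are all fixed data. The only genuine unknowns are the fluid velocity $\vu^F$ on $Q_F$ (equivalently $\vr^F\vu^F$) and, a fortiori, the momentum $\vm=\vr\vu$ on all of $(0,T)\times\Omega$. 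First I would observe that on the solid part everything is already prescribed and the momentum balance \eqref{i6}, \eqref{i7} holds by admissibility with $\mathbb{T}=-p(\vr^F)\mathbb{I}$; the mass identity \eqref{M1} holds on the solid part because $\vr^S$ satisfies \eqref{i5}. Thus it suffices to construct infinitely many $\vu^F$ solving the fluid subsystem \eqref{M2} on the (moving, Lipschitz in time, $C^2$ in space) domain $Q_F$ with the impermeability condition \eqref{i11}, the prescribed density $\vr^F$, the continuity equation, and matching the compatible initial data \eqref{M4}--\eqref{M5}.

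Next I would reformulate the fluid subsystem in the convex-integration form. Write $\vm^F=\vr^F\vu^F$. The continuity equation $\partial_t\vr^F+\Div\vm^F=0$ with $\vr^F$ given is a linear constraint on $\vm^F$; since $\vr^F$ is smooth and its time derivative is known, one can split off a particular smooth solution $\vm^F=\vc{v}+\nabla\Phi$ where $-\Del\Phi(t,\cdot)=\partial_t\vr^F(t,\cdot)$ in $\Omega_F(t)$ with Neumann data $\nabla\Phi\cdot\vc{n}=\vr^F\vu^S\cdot\vc{n}$ on $\partial\Omega_F(t)$ (solvable precisely because of \eqref{M2a}, cf. Remark \ref{MR1}), and $\vc{v}$ is then required to be (time-dependent) divergence free with zero normal trace. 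This $\Phi$ is the time-dependent analogue of $\Phi_0$ in \eqref{M5}, so choosing $\vc{v}(0,\cdot)=0$ matches the initial momentum \eqref{M4}. The momentum equation \eqref{M2} then becomes, after substituting $\vm^F=\vc{v}+\nabla\Phi$ and moving all known terms (those involving $\nabla\Phi$, $p(\vr^F)$, $\partial_t\nabla\Phi$) to a prescribed forcing, an equation of the form $\partial_t\vc{v}+\Div\!\left(\frac{\vc{v}\otimes\vc{v}}{\vr^F}\right)+\Div\mathbb{M}+\nabla e=\vc{F}$ for a symmetric traceless tensor $\mathbb{M}$ and a scalar $e$, i.e. exactly the structure to which the abstract result of \cite{Fei2016} applies once we identify the ``subsolution'' space, the geometric (wave-cone / $\Lambda$-convexity) hull condition, and a suitable strict subsolution.

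The heart of the argument is therefore the construction of a single strict subsolution and the verification that the relevant convex hull has nonempty interior at the correct ``kinetic energy'' level, so that the abstract theorem of \cite{Fei2016} yields an $E$-valued Baire-category argument producing infinitely many solutions. For the strict subsolution I would simply take $\vc{v}\equiv 0$ (so $\vm^F=\nabla\Phi$, the potential flow), with $\mathbb{M}$ and the energy $e$ chosen large enough to dominate $\frac{\nabla\Phi\otimes\nabla\Phi}{\vr^F}$ pointwise with room to spare; the smoothness of $\vr^F$ and $\Phi$ and the strict positivity $\inf\vr^F>0$ make all the required bounds uniform on $\Ov{Q}_F$. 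The boundary condition $\vc{v}\cdot\vc{n}|_{\partial\Omega_F(t)}=0$ on a $C^2$ moving domain is handled as in \cite{Fei2016, DelSze3} by working with the stream/potential formulation on simply connected pieces (this is why $S_i$ and $\Omega$ are assumed $C^2$ and $S_i$ simply connected), extending the construction to the moving domain by a change of variables along the flow map $\vc{X}(t,\cdot)$, which is bi-Lipschitz by \eqref{i4}. Finally I would check that each solution produced satisfies the compatibility \eqref{M3} and hence feeds consistently back into \eqref{i6}, \eqref{i7}: but this is automatic, since $\mathbb{T}=-p(\vr^F)\mathbb{I}$ was fixed from the start and $\vr^F$ is unchanged by the construction — the convex integration only modifies $\vu^F$, never $\vr^F$ — so the rigid body equations remain satisfied and $B_i(t)=B_i^S(t)$ throughout.

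\textbf{Main obstacle.} I expect the principal difficulty to be the moving-domain aspect: adapting the incompressible convex integration of \cite{DelSze3} and the abstract framework of \cite{Fei2016}, both originally set on a fixed domain (or torus), to the fluid region $\Omega_F(t)=\Omega\setminus\cup_i B_i(t)$ whose boundary moves with the prescribed (only Lipschitz-in-time) velocity $\vu^S$, while keeping the impermeability constraint \eqref{i11} exactly and preserving the strict subsolution property under the pullback by $\vc{X}(t,\cdot)$. The rest — verifying the linear constraints, producing the strict subsolution from potential flow, and matching the compatible initial data — is essentially bookkeeping once the right functional-analytic setup is in place.
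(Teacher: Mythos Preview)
Your proposal is correct and follows essentially the same route as the paper: fix the admissible motion and $\vr^F$, solve the Neumann problem \eqref{R2} for $\Phi$ at each time, write $\vm^F=\Grad\Phi+\vc{v}$ with $\vc{v}$ divergence-free and tangential, reduce to the abstract convex-integration framework of \cite{Fei2016}, and verify non-triviality of the subsolution set via the trivial ansatz $\vc{v}=0$, $\mathbb{F}=0$ by choosing the energy level $\Lambda(t)$ large enough (inequality \eqref{C4}).

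Two remarks. First, your rewriting of the momentum equation as $\partial_t\vc{v}+\Div(\vc{v}\otimes\vc{v}/\vr^F)+\dots=\vc{F}$ with a fixed forcing is not quite what the paper does: the cross terms $\vc{v}\otimes\Grad\Phi$ depend on the unknown, so the paper keeps the full convective term $\frac{(\vc{v}+\Grad\Phi)\otimes(\vc{v}+\Grad\Phi)}{\vr^F}$ intact and invokes the version of \cite{Fei2016} that already accommodates such an affine shift in the kinetic-energy constraint. Second, the ``main obstacle'' you anticipate --- the moving domain --- is in fact not an obstacle in the paper's treatment, and no pullback along $\vc{X}(t,\cdot)$ is used. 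The subsolutions in $X_0$ are taken in $C^1_c(Q_T)$, i.e.\ compactly supported in the open space--time fluid region, so the impermeability condition and the motion of $\partial\Omega_F(t)$ are absorbed automatically; the weak formulations \eqref{C1}, \eqref{C2} are posed directly on $Q_F$ with test functions on all of $R^{d+1}$, and the normal-trace constraint $\bfphi\cdot\vc{n}|_{\partial\Omega_F(t)}=0$ is only needed to pass from \eqref{C2}--\eqref{C3} back to \eqref{R4}. So the argument is simpler than you expect.
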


The rest of the paper is essentially devoted to the proof of Theorem \ref{MT1}. In view of the recent results by 
Chiodaroli \cite{Chiod},  De Lellis and Sz\' ekelyhidi \cite{DelSze3}, the conclusion of Theorem \ref{MT1} may not come as a complete surprise. The striking fact, however, is that solutions exist for any {\it a priori} given motion of the rigid objects. One may certainly argue that most of the solutions are not physical in the sense that the weak formulation does not include any kind of 
energy balance. Indeed for $\vu$ and $\varrho$ sufficiently smooth it is a routine matter to deduce from the momentum balance 
equations \eqref{i6}, \eqref{i7}, \eqref{i10} that
\[
\partial_t \mathcal E(t) = 0,
\]
where
\begin{equation}\label{M7a}
\mathcal E(t) = \int_{\Omega_F (t)} \frac12 \vr^F |\vu^F|^2 + P(\vr^F) \dx + \frac12 \sum_{i=1}^N \Big( 
m_i |\veta_i|^2 + \mathbb{J}^i: [\vomega_i \otimes \vomega_i] \Big).
\end{equation}
Here $P(\varrho)$ is the pressure potential,
\[
P'(\varrho)\varrho - P(\varrho) = p(\varrho).
\]
Solution are called \emph{admissible}, if the energy inequality set
\begin{equation} \label{M7}
\mathcal{E}(t) \leq \mathcal{E}(s) \ \mbox{holds for a.a.} \ 0 \leq s \leq t \leq T.
\end{equation}
As we shall see in the course of the proof of Theorem \eqref{MT1}, the solutions can be constructed to be admissible, at least in the 
open interval $(0,T)$.  
The crucial point, of course, is to see whether \eqref{M7} holds for $s = 0$, where the energy is expressed in terms of the initial data. We shall discuss this  issue in Section \ref{se:5}. 

\section{Reformulation}
\label{R}

We reformulate the problem to fit the abstract framework developed in \cite{Fei2016}. First, let us fix the admissible motion 
of the rigid bodies $\left[ \vr^S, \vu^S, \{ B_i \}_{i=1}^N \right]$ with the associated density $\vr^F$ as in Theorem \ref{MT1}. As $\vu^S$ is globally Lipschitz, we have 
\begin{equation} \label{R1}
B_i(t) \cap B_j(t) = \emptyset, \ B_i(t) \subset \Omega \ \mbox{for any}\ i=1,\dots,N, \ j \ne i,\ 
t \in [0,T]. 
\end{equation}
Accordingly, as the boundaries of $B_i$ are of class $C^2$, the part of $\Omega$ occupied by the fluid, 
\[
\Omega_F(t) = \Omega \setminus \cup_{i=1}^N B_i(t) \ \mbox{is a bounded domain of class}\ C^2 
\ \mbox{for any}\ t \in [0,T]. 
\]

Thus we may identify the potential $\Phi = \Phi(t, \cdot)$ at any $t \in [0,T]$ as the unique solution of the inhomogeneous Neumann problem:
\begin{equation} \label{R2}
- \Del \Phi (t) = \partial_t \vr^F(t, \cdot) \ \mbox{in}\ 
\Omega_F(t), \ \Grad \Phi(t) \cdot \vc{n}|_{\partial \Omega_F(t)} = 
\vr^F(t, \cdot) \vu^S(t, \cdot) \cdot \vc{n}|_{\partial \Omega_F(t) },
\end{equation}
normalized by the condition 
\[
\int_{\Omega_F(t)} \Phi(t) \ \dx = 0.
\]

We fix the density $\vr$, 
\[
\vr(t, \cdot) = \left\{ \begin{array}{l} \vr^S(t,\cdot) \ \mbox{in}\ \Omega_S (t), \\ \\ 
\vr^F(t, \cdot) \ \mbox{in}\ \Omega_F (t), \end{array} \right. \ t \in [0,T].           
\]
The velocity $\vu^F$ in the fluid part will be determined via the momentum $\vm^F = \vr_F \vu_F$, where 
\[
\vm^F = \Grad \Phi + \vc{v}, \ \mbox{where}\ \Div \vc{v} = 0, \ \vc{v}(t, \cdot) \cdot \vc{n}|_{\partial \Omega_F(t)} = 0
\ \mbox{for any}\ t \in [0,T]. 
\]
In the weak sense, the conditions imposed on $\vc{v}$ may be stated as 
\begin{equation} \label{R3}
\int_0^T \int_{\Omega_F(t)} \vc{v} \cdot \Grad \varphi \ \dxdt = 0
\ \mbox{for any}\ \varphi \in C^1([0,T] \times R^d).
\end{equation}
Now, observe that \eqref{R2}, \eqref{R3} imply that the equation of continuity \eqref{M1} is automatically satisfied. 

Thus the proof of Theorem \ref{MT1} reduces to finding a function $\vc{v}$ satisfying the momentum equation in the 
fluid domain:
\[
\partial_t \vc{v} + \Div \left( \frac{ (\vc{v} + \Grad \Phi) \otimes (\vc{v} + \Grad \Phi) }{\vr^F} \right) 
+ \Grad \left( p(\vr_F) + \partial_t \Phi \right) = 0 , \ \vc{v}(0, \cdot) = 0.
\]
In view of \eqref{R2}, \eqref{R3}, this can be written in the weak form:
\begin{equation} \label{R4}
\int_0^T \int_{\Omega_F(t)} \left[ \vc{v} \cdot \partial_t \bfphi 
+ \frac{ (\vc{v} + \Grad \Phi) \otimes (\vc{v} + \Grad \Phi) }{\vr^F} : \Grad \bfphi + 
\left( p(\vr_F) + \partial_t \Phi \right)  \Div \bfphi \right] \dxdt  
= 0
\end{equation}
for any $\bfphi \in C^1_c([0,T) \times R^d; R^d)$, $\bfphi (t, \cdot) \cdot \vc{n} |_{\partial \Omega_F(t)} = 0$.

We infer that the proof of Theorem \ref{MT1} reduces to finding the field $\vc{v} \in L^\infty(Q_F; R^d)$ satisfying 
\eqref{R3}, \eqref{R4} for given 
\begin{equation} \label{R5}
\vr^F ,\ \Phi, \ \partial_t \Phi, \ (\vr^F)^{-1} \in C(\Ov{Q}_F).
\end{equation}

\section{Convex integration}
\label{C}

The problem \eqref{R3}, \eqref{R4}, with fixed parameters satisfying \eqref{R5}, may be solved by a version of the convex integration method developed in \cite{Fei2016}. We start by rewriting the equations in a slightly different form: 
\begin{equation} \label{C1}
\int_0^T \int_{\Omega_F(t)} \vc{v} \cdot \Grad \varphi \ \dxdt = 0
\ \mbox{for any}\ \varphi \in C^1([0,T] \times R^d), 
\end{equation}
\begin{equation} \label{C2}
\int_0^T \int_{\Omega_F(t)} \left[ \vc{v} \cdot \partial_t \bfphi 
+ \left( \frac{ (\vc{v} + \Grad \Phi) \otimes (\vc{v} + \Grad \Phi) }{\vr^F} -  
\frac{1}{d} \frac{|\vc{v} + \Grad \Phi |^2}{\vr^F} \mathbb{I} \right)  : \Grad \bfphi \right] \dxdt  
= 0
\end{equation}
for any $\bfphi \in C^1_c([0,T) \times R^d)$. In addition, we prescribe the kinetic energy, 
\begin{equation} \label{C3}
\frac{1}{2} \frac{|\vc{v} + \Grad \Phi |^2}{\vr^F} = - \frac{d}{2} \Big( \partial_t \Phi + p(\vr_F) \Big) + \Lambda(t) 
\equiv E(t) \ \mbox{a.a. in}\ Q_T,
\end{equation}
where $\Lambda = \Lambda(t)$ is a spatially homogeneous function to be determined below. Observe that \eqref{C2}, \eqref{C3} 
yield \eqref{R4} as soon as the test function $\bfphi$ in \eqref{C2} satisfies 
\[
\bfphi (t, \cdot) \cdot \vc{n}|_{\partial \Omega_F(t)} = 0.
\]

To solve \eqref{C1}--\eqref{C3}, we first follow the strategy of De Lellis and Sz\' ekelyhidi introducing the space of subsolutions 
$X_0$ containing velocity fields $\vc{v}$ with the associated fluxes $\mathbb{F}$ satisfying:
\begin{itemize}
\item
\[
\vc{v} \in C^1_c(Q_T; R^d),\ \mathbb{F} \in C^1_c({Q}_T; R^{d \times d}_{{\rm sym,0}});
\]
\item
\[
\Div \vc{v} = 0,\ 
\partial_t \vc{v} + \Div \mathbb{F} = 0\ \mbox{in} \ R^{d + 1}
\]
\item
\[
\frac{1}{2} \frac{|\vc{v} + \Grad \Phi|^2}{\vr^F} \leq 
\frac{d}{2} \lambda_{\rm max} \left[ \frac{ (\vc{v} + \Grad \Phi) \otimes (\vc{v} + \Grad \Phi)}{\vr^F} - 
\mathbb{F} \right] < E \in Q_T.
\] 

\end{itemize}

As explained in detail in \cite[Section 2--4]{Fei2016}, the solutions of \eqref{C1}--\eqref{C3} are obtained as zero points of the convex functional 
\[
J[\vc{v}] = \int_0^T \int_{\Omega_F(t)} \left[ \frac{1}{2} \frac{|\vc{v} + \Grad \Phi|^2}{\vr^F} - E \right] \dxdt 
\]
defined on a completion $X$ of the space of subsolutions $X_0$ with respect to the metrics of the topological space 
\[
C_{{\rm weak}}([0,T]; L^2(\Omega; R^d)) \cap L^\infty((0,T) \times \Omega; R^d).
\] 
As shown in \cite{Fei2016}, $J$ vanishes on the set of its points of continuity, where the latter is not of the first Baire category 
in $X$, in particular, it is dense in $X$, see \cite{Fei2016} for details. 

Accordingly, it is enough to show that $X$ is non--trivial, meaning the set of subsolutions $X_0$ is non--empty. 
Considering the trivial ansatz $\vc{v} = 0$, $\mathbb{F} = 0$, we get 
\begin{equation} \label{C4}
\frac{d}{2} \lambda_{\rm max} \left[ \frac{\Grad \Phi \otimes \Grad \Phi}{\vr^F}  \right] < \Lambda 
- \frac{d}{2} \partial_t \Phi - \frac{d}{2} p(\vr_F)
\end{equation}
As $\Phi$, $\vr^F$ are fixed belonging to the class \eqref{R5}, one can definitely find $\Lambda = \Lambda(t)$ so that
\eqref{C4} holds in $\Ov{Q}_T$. This completes the proof of Theorem \ref{MT1}. 

Finally, observe that the total energy now reads 
\[
\begin{split}
\mathcal{E}(t) &= \int_{\Omega_F(t)} \frac{1}{2} \frac{ |\vc{v} + \Grad \Phi|^2 }{\vr^F} + P(\vr_F) \ \dx 
+ \frac12 \sum_{i=1}^N \Big( 
m_i |\veta_i|^2 + \mathbb{J}^i: [\vomega_i \otimes \vomega_i] \Big)\\
&= \Lambda(t) |\Omega_F(t)| + \int_{\Omega_F(t)} \left[ P(\vr^F) - \frac{d}{2} \partial_t \Phi  - \frac{d}{2} p(\vr^F) \right] \dx
+ \frac12 \sum_{i=1}^N \Big( 
m_i |\veta_i|^2 + \mathbb{J}^i: [\vomega_i \otimes \vomega_i] \Big),
\end{split}
\]
where the last equality holds for a.a. $t \in [0,T]$. In particular, we can choose $\Lambda = \Lambda(t)$ in such a way that the 
total energy equals for a.a. $t \in (0,T)$ to a strictly decreasing function, meaning the solutions are admissible in the sense of 
\eqref{M7}. Thus we have obtained the following. 

\begin{Corollary}[{\bf Admissible solutions}] \label{CC1}

Under the hypotheses of Theorem \ref{MT1}, the fluid--structure interaction problem \eqref{i1}--\eqref{i12} admits infinitely 
many weak solutions $\left[ \vr, \vm, \left\{ B_{i}  \right\}_{i = 1}^N \right]$ 
in the sense of Definition \ref{MD1}, where 
\[
B_i(t) = B_i^S(t) \ \mbox{for all}\ t \in [0,T], \ \mbox{and any}\ i= 1,\dots, N.
\]
In addition, the solutions satisfy the energy inequality 
\begin{equation} \label{C5}
\mathcal{E}(t) \leq \mathcal{E}(s) \ \mbox{for a.a.}\ 0 \leq s \leq t \leq T.
\end{equation}

\end{Corollary}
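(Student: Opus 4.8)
The plan is to observe that Corollary \ref{CC1} is essentially already proved by the construction in Section \ref{C}; what remains is to make precise the final choice of the modulating function $\Lambda(t)$ so that the energy inequality \eqref{C5} holds. First I would recall that for every admissible field $\vc{v}$ produced by the convex integration scheme, the kinetic energy identity \eqref{C3} holds almost everywhere in $Q_T$, so that on each slice $\Omega_F(t)$ one has
\[
\int_{\Omega_F(t)} \frac{1}{2} \frac{|\vc{v} + \Grad \Phi|^2}{\vr^F} \ \dx = \Lambda(t) |\Omega_F(t)| - \frac{d}{2} \int_{\Omega_F(t)} \Big( \partial_t \Phi + p(\vr^F) \Big) \ \dx.
\]
Adding the pressure-potential term and the rigid-body kinetic energy, which are both completely determined by the fixed data $[\vr^S,\vu^S,\{B_i^S\}]$ and $\vr^F$ and are therefore prescribed functions of $t$ independent of the choice of $\vc{v}$, yields exactly the formula for $\mathcal{E}(t)$ displayed just before the statement of the corollary, namely $\mathcal{E}(t) = \Lambda(t) |\Omega_F(t)| + G(t)$ where $G$ is a fixed continuous function on $[0,T]$.

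Next I would exploit the freedom in $\Lambda$. The only constraint coming from the convex integration machinery is the strict inequality \eqref{C4}, i.e. $\Lambda(t) > \frac{d}{2}\lambda_{\max}\big[\Grad\Phi\otimes\Grad\Phi / \vr^F\big] + \frac{d}{2}\partial_t\Phi + \frac{d}{2}p(\vr^F)$ pointwise on $\Ov{Q}_T$; call the right-hand side $h(t,x)$ and set $\underline{\Lambda}(t) = \max_{x \in \Ov{\Omega_F(t)}} h(t,x)$, which is continuous on $[0,T]$ by \eqref{R5} and the regularity of $\Omega_F(\cdot)$. Then I would pick any $C^1$, strictly decreasing function $\mathcal{E}_\ast : [0,T] \to \R$ satisfying $\mathcal{E}_\ast(t) - G(t) > \underline{\Lambda}(t)\,|\Omega_F(t)|$ for all $t \in [0,T]$ — such a function exists because the right-hand side is a fixed continuous function of $t$ on the compact interval $[0,T]$, so one can take $\mathcal{E}_\ast(t) = C - \delta t$ with $C$ large and $\delta>0$ small. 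Defining $\Lambda(t) := (\mathcal{E}_\ast(t) - G(t))/|\Omega_F(t)|$ (using that $|\Omega_F(t)| = |\Omega| - \sum_i |B_i^S(t)|$ is a fixed positive continuous function of $t$, bounded below since the bodies stay disjoint inside $\Omega$), we get $\Lambda > \underline{\Lambda}$ pointwise, so \eqref{C4} holds and the set of subsolutions $X_0$ is non-empty; hence, by the abstract result of \cite{Fei2016} quoted in Section \ref{C}, there are infinitely many solutions $\vc{v}$ of \eqref{C1}--\eqref{C3}, each giving a weak solution in the sense of Definition \ref{MD1} with the prescribed body motion. For every one of them $\mathcal{E}(t) = \Lambda(t)|\Omega_F(t)| + G(t) = \mathcal{E}_\ast(t)$ for a.a. $t$, and $\mathcal{E}_\ast$ is strictly decreasing, so \eqref{C5} holds for a.a. $0 \le s \le t \le T$.

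I do not expect a genuine obstacle here, since the corollary is a bookkeeping refinement of the theorem; the one point that deserves a line of care is the continuity and positivity of $t \mapsto |\Omega_F(t)|$ and of $t \mapsto G(t)$, which follow respectively from the Lipschitz motion of the bodies (they remain disjoint and compactly contained in $\Omega$, cf. \eqref{R1}) and from the assumed regularity \eqref{R5} of $\vr^F$, $\Phi$, $\partial_t\Phi$ together with the smooth dependence of the rigid-body quantities $m_i$, $\mathbb{J}^i$, $\veta_i$, $\vomega_i$ on $t$. The only mild subtlety is that the energy equality $\mathcal{E}(t) = \mathcal{E}_\ast(t)$ holds only for almost every $t$ — at the points of continuity of the convex functional $J$ in the completed space $X$ — which is exactly why \eqref{C5} is stated with ``for a.a.''; in particular one cannot in general include $s = 0$, and that borderline question is deferred to Section \ref{se:5} as the excerpt already notes.
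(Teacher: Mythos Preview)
Your proposal is correct and follows essentially the same approach as the paper: the paper's proof is the single paragraph preceding the corollary, which observes that $\mathcal{E}(t) = \Lambda(t)|\Omega_F(t)| + G(t)$ for a.a.\ $t$ with $G$ fixed, and then remarks that $\Lambda$ can be chosen so that $\mathcal{E}$ coincides a.e.\ with a strictly decreasing function. You have simply spelled out the details of that choice --- defining $\underline{\Lambda}(t)$, picking $\mathcal{E}_\ast(t) = C - \delta t$, and checking positivity and continuity of $|\Omega_F(t)|$ --- which the paper leaves implicit. One small expository slip: the ``a.a.'' in \eqref{C5} comes from \eqref{C3} holding a.e.\ in $Q_T$, not from the location of continuity points of $J$ in $X$ (those are points in the function space, not in time); but this does not affect your argument.
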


The question when the energy inequality \eqref{C5} includes the initial time $s = 0$ will be discussed in the next section. 

\section{Energy inequality, physically relevant solutions}
\label{P}\label{se:5}

We briefly discuss the validity of the energy inequality ``up to the origin'', specifically 
\begin{equation} \label{P1}
\begin{split}
\mathcal{E}(t) &= \int_{\Omega_F(t)} \left[ \frac{1}{2} \frac{ |\vm|^2 }{\vr} + P(\vr) \right](t, \cdot) \dx 
+ \frac12 \sum_{i=1}^N \Big( 
m_i |\veta_i(t)|^2 + \mathbb{J}^i: [\vomega_i(t) \otimes \vomega_i(t)] \Big)\\ &\leq
\int_{\Omega_F(0)} \left[ \frac{1}{2} \frac{ |\vm_0|^2 }{\vr_0} + P(\vr_0) \right] \dx 
+ \frac12 \sum_{i=1}^N \Big( 
m_i |\veta_{0,i}|^2 + \mathbb{J}^i: [\vomega_{0,i} \otimes \vomega_{0,i}] \Big),\ t \geq 0.
\end{split}
\end{equation}
In view of the specific construction used in Section \ref{C}, notably with $\Lambda$ satisfying \eqref{C4}, 
we do not expect \eqref{P1} to hold for the solutions obtained in Section \ref{C}. However, Corollary \ref{CC1} 
ensures the existence of a full measure sets of times $s$ such that \eqref{C5} holds. This yields the following result.

\begin{Theorem}[{\bf Existence of weak solutions satisfying energy inequality}] \label{PT1}

Let $\Omega \subset R^d$, $d = 2,3$ be a bounded domain of class $C^2$. Let $B_{i,0} = \Ov{S}_i$ be given, 
where $S_i \subset B_{i,0} \subset \Omega$ are simply connected domains of class $C^2$, $i = 1, \dots, N$. Suppose that $\left[ 
\vr^S, \vu^S, \{ B^S_i \}_{i = 1}^N \right]$, with the associated fluid density $\vr^F$, is an admissible motion in the sense of Definition \ref{MD2}.

The there exists a set $\mathcal{S} \subset (0,T)$ of full Lebesgue measure such that for any $s \in \mathcal{S}$ we have the following property: 

For any data 
\[
B_{i,0} = B^S_i(s), \ i = 1,\dots, N,\ 
\vr_0 = \left\{ \begin{array}{l} \vr^S(s) \ \mbox{in}\ \cup_{i=1}^N B_{i,0} ,\\ \\ 
\vr^F(s) \ \mbox{in}\ \Omega \setminus \cup_{i=1}^N B_{i,0} \end{array} \right., s \in \mathcal{S}
\] 
there exist (infinitely many) $\vm_0$ such that 
the fluid--structure interaction problem \eqref{i1}--\eqref{i12} admits a weak solution 
$\left[ \vr, \vm, \left\{ B_{i}  \right\}_{i = 1}^N \right]$ 
in the sense of Definition \ref{MD1} in $(0, T-s)$. In addition
\[
B_i(t) = B_i^S(t+s) \ \mbox{for all}\ t \in [0,T-s], \ \mbox{and any}\ i= 1,\dots, N,
\]
and the energy inequality \eqref{P1} holds.

\end{Theorem}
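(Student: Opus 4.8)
The plan is to deduce Theorem~\ref{PT1} from Corollary~\ref{CC1} by exploiting the time--shift structure of the construction. First I would invoke Corollary~\ref{CC1}: for the fixed admissible motion $\left[\vr^S, \vu^S, \{B^S_i\}_{i=1}^N\right]$ with associated density $\vr^F$, we obtain infinitely many weak solutions on $[0,T]$ satisfying the energy inequality $\mathcal E(t) \le \mathcal E(s)$ for a.a.\ $0 \le s \le t \le T$. Denote by $\mathcal S \subset (0,T)$ the set of times $s$ which are Lebesgue points in the sense that $\mathcal E(s)$ is finite and the inequality holds with this $s$ as the left endpoint for a.a.\ $t \ge s$; by Corollary~\ref{CC1} this set has full measure. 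The point is that $\mathcal S$ can be chosen \emph{independently} of the initial momentum, because the construction of Section~\ref{C} fixes $\vr^F$, $\Phi$, $\partial_t\Phi$ first and only afterwards solves for $\vc v$; the energy identity displayed at the end of Section~\ref{C} shows $\mathcal E(t)$ equals a prescribed strictly decreasing function plus the (fixed) rigid--body kinetic energy, so the exceptional null set depends only on the regularity of $\Lambda$ and $\vr^F$.

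Next, fix $s \in \mathcal S$ and define the shifted data exactly as in the statement: $B_{i,0} = B^S_i(s)$, and $\vr_0$ equal to $\vr^S(s,\cdot)$ on the bodies and $\vr^F(s,\cdot)$ on $\Omega \setminus \cup_i B_{i,0}$. I would then run the reformulation of Section~\ref{R} on the time interval $[0, T-s]$ for the shifted motion $t \mapsto \left[\vr^S(t+s,\cdot), \vu^S(t+s,\cdot), \{B^S_i(t+s)\}\right]$, which is again admissible in the sense of Definition~\ref{MD2} (the defining relations \eqref{i6}, \eqref{i7} and the mass constraint \eqref{M2a} are time--translation invariant). The associated potential is $\Phi(t+s,\cdot)$ solving the shifted Neumann problem \eqref{R2}, and the compatible initial momentum is $\vm_0 = \Grad\Phi_0$ on $\Omega_F(0)$ with $\Phi_0$ determined by \eqref{M4}, \eqref{M5} evaluated at time $s$ — this is where the ``infinitely many $\vm_0$'' enter, since on the rigid bodies $\vm_0 = \vr^S(s,\cdot)\vu^S(s,\cdot)$ is forced but the arbitrary divergence--free, tangential part $\vc v(0,\cdot)$ in the fluid may be added (in fact the construction of Section~\ref{C} uses $\vc v(0,\cdot) = 0$, so I would take $\vm_0 = \Grad\Phi_0$ and note the freedom in the non--uniqueness of the whole solution family rather than of $\vm_0$ itself; alternatively one permits nonzero $\vc v(0,\cdot)$ in $X_0$). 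Applying Corollary~\ref{CC1} to this shifted problem yields infinitely many weak solutions on $(0, T-s)$ with $B_i(t) = B^S_i(t+s)$ and the energy inequality $\mathcal E(t) \le \mathcal E(\tau)$ for a.a.\ $0 \le \tau \le t \le T-s$.

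The remaining step is to upgrade the energy inequality to include $\tau = 0$, i.e.\ to establish \eqref{P1}. Here I would use that $s \in \mathcal S$: the energy of the \emph{original} solution satisfies $\mathcal E^{\rm orig}(s+t) \le \mathcal E^{\rm orig}(s)$ for a.a.\ $t \ge 0$, and by the time--shift invariance of the equations the restriction of the original solution to $[s,T]$, reparametrised by $t \mapsto t-s$, is precisely a weak solution of the shifted problem with initial data exactly the $\left[\vr_0, \vm_0, \{B_{i,0}\}\right]$ above. For this particular solution $\mathcal E^{\rm shift}(t) = \mathcal E^{\rm orig}(s+t)$ and $\mathcal E^{\rm shift}(0^+) = \mathcal E^{\rm orig}(s)$; since $s$ was chosen to be a point where the energy inequality holds with $s$ on the left, one gets $\mathcal E^{\rm shift}(t) \le \mathcal E^{\rm shift}(0)$ for a.a.\ $t$, and the right--hand side is expressed through the initial data $\vr_0, \vm_0$ as in \eqref{P1} because $\mathcal E$ is a pointwise functional of $(\vr,\vm)$ and the traces at time $s$ are well--defined (the construction gives $\vc v \in C_{\rm weak}([0,T];L^2)$, so $\vm(s,\cdot)$ makes sense). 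Then the \emph{other} infinitely many solutions on $(0,T-s)$ produced by Corollary~\ref{CC1} for the same initial data also satisfy \eqref{C5}, and one selects among them those for which the a.a.\ inequality can be taken down to $0$ — but the cleanest route is simply to note that the single ``shifted original'' solution already satisfies \eqref{P1}, and then re--apply convex integration \emph{on top of} its initial layer to generate infinitely many more with the same property, exactly as Corollary~\ref{CC1} produces infinitely many from one. The main obstacle I anticipate is the bookkeeping around the initial momentum $\vm_0$: one must check that the compatibility conditions of Definition~\ref{MD3} at time $s$ are automatically met (they are, by Remark~\ref{MR1} applied at $t=s$ instead of $t=0$), and that the null set $\mathcal S$ genuinely does not depend on which $\vm_0$ one later picks — this hinges on the observation, emphasised in Remark~\ref{MR2}, that $\vr^F$ and hence $\Phi, \partial_t\Phi, \Lambda$ are chosen \emph{before} any velocity field, so the exceptional set is fixed once and for all by the motion alone.
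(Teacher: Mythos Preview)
Your approach is correct and matches the paper's: invoke Corollary~\ref{CC1}, take $\mathcal S$ to be the full--measure set of times $s$ at which the a.e.\ energy inequality \eqref{C5} holds with $s$ as the left endpoint, and time--shift a solution to start at $s$, so that \eqref{P1} follows with $\vm_0 = \vm(s,\cdot)$; the paper's own argument is the single sentence preceding the theorem plus a pointer to \cite[Section~6]{Fei2016} for the ``infinitely many'' upgrade. Your detour through re--running Corollary~\ref{CC1} on the shifted problem is unnecessary (and, as you yourself note, does not by itself give the inequality at $\tau=0$) --- the shifted original solution already does the job, and $\vm_0$ need not be of the compatible form $\nabla_x\Phi_0$ from Definition~\ref{MD3} since Theorem~\ref{PT1} only asserts the \emph{existence} of suitable $\vm_0$. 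Your observation that $\mathcal S$ is independent of which solution is chosen --- because the energy $\mathcal E(t)$ equals, for a.a.\ $t$, a function of $t$ prescribed solely by $\Lambda$, $\vr^F$, $\Phi$ and the rigid--body data --- is a genuinely useful point the paper leaves implicit, and it is exactly what makes the ``infinitely many $\vm_0$'' claim plausible: the infinitely many solutions from Corollary~\ref{CC1} share a common $\mathcal S$ and yield, at a common $s\in\mathcal S$, distinct momenta $\vm(s,\cdot)$.
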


Theorem \ref{PT1} asserts the existence of at least one solution, however, a refined analysis performed in \cite[Section 6]{Fei2016} may be used to 
show that there are in fact infinitely many solutions starting from the same initial data. We leave the interested reader to work out the details.
 
\section{On admissible motion}
\label{pres.con}

We conclude the paper by several examples concerning possible admissible motions of rigid bodies.

\subsection{A ball in three dimensions}

Let $\bfeta$ be a given velocity of the center of gravity of some ball $B$. We have, according to \eqref{i6},
$$
\pat \bfeta = \frac1{m_B}\int_{\partial B}p \vn \ {\rm d}S_x.
$$
We consider a pressure 
$$
p = \frac{m_B}{|B|} \pat\bfeta(t)\cdot x \sigma_{U} + p_0(t)
$$
where $U$ is an open neighborhood of $B$ and $\sigma:\Omega \mapsto [0,\infty)$ is a smooth function satisfying $\chi_B\leq \sigma \chi_U$. The function $p_0:[0,T]\mapsto \mathbb R$ is such that the whole pressure is positive and the corresponding density $\varrho$ satisfies \eqref{M2a} for every $t$. Such pressure may also induce some rotation -- see Remark \ref{hetero.ball}. Anyway, we may construct an admissible pressure and density for any given smooth translation of the center of gravity of the given ball.

Let now assume the center of gravity agrees with the geometrical center of the ball, i.e.
$$
\frac 1{m_B} \int_B \varrho(x) x \ \dx = \frac {1}{|B|} \int_B x \ \dx.
$$
Then the rotation of such ball cannot be influenced by an action of the perfect fluid. Indeed, the second term on the right hand side of \eqref{i7} is zero for every sufficiently smooth pressure. This is a consequence of $$(x-x_{B})\times \vn = 0,\ \mbox{for every }x\in\partial B.$$
Consequently, any smooth translation of such ball is an admissible motion assuming the rotation is constant in time.
The same applies to a finite number of rigid balls provided there is neiter mutual contact nor a contact with the boundary $\partial \Omega$ at the initial time.

\subsection{Homogeneous body in two dimensions}

We assume $\varrho(x)|_{B_i} =1$. Throughout this section we understand $\vn:\partial B\mapsto S$ as a function which assign a normal direction to a point of a boundary. Moreover, we assume $i$ is fixed through this chapter and thus we drop this particular index.
\begin{Lemma}\label{forces}
Assume $B$ is a strictly convex compact body with a $C^2$ boundary. Then for every $x_0\in \partial B$ and every neighborhood $U_{x_0}$ there exists a bounded $C^1$ function $p:\mathbb R^2\mapsto \mathbb R^+_0$ supported on $U_{x_0}$ such that 
\begin{equation}\label{integral.p}
\int_{\partial B} p {\bf n}\ {\rm d}S_x = {\bf F},
\end{equation}
where $\vF$ has a direction of $\vn(x_0)$.
\end{Lemma}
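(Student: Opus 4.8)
The plan is to construct $p$ explicitly as a bump function localized near $x_0$ whose first moment over $\partial B$ points in the direction $\vn(x_0)$. First I would set up local coordinates: since $\partial B$ is $C^2$ and $B$ is strictly convex, near $x_0$ the boundary is a graph over the tangent line at $x_0$ with strictly positive curvature, and the Gauss map $x \mapsto \vn(x)$ is a $C^1$ diffeomorphism from a neighborhood of $x_0$ in $\partial B$ onto a neighborhood of $\vn(x_0)$ in $S$. Fix a small arc $\Gamma_\delta \subset \partial B \cap U_{x_0}$ centered at $x_0$, symmetric with respect to $x_0$ in the arclength parameter. The key observation is that for a nonnegative density $p$ supported on $\Gamma_\delta$, the vector $\int_{\partial B} p\, \vn \ {\rm d}S_x$ is a weighted average of the unit vectors $\{\vn(x) : x \in \Gamma_\delta\}$, all of which lie within an angle $O(\delta)$ of $\vn(x_0)$; hence the integral is automatically close to a positive multiple of $\vn(x_0)$, and in particular nonzero for $\delta$ small. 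This already gives a vector $\vF$ that is \emph{nearly} in the direction $\vn(x_0)$.

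To get the direction \emph{exactly} right, I would impose one extra constraint: choose $p$ so that its first moment along the tangential direction vanishes, i.e. $\int_{\Gamma_\delta} p(x)\, \big(\vn(x)\cdot \vc{t}(x_0)\big) \ {\rm d}S_x = 0$, where $\vc{t}(x_0)$ is the unit tangent at $x_0$. Writing the arclength parameter as $s \in (-\delta,\delta)$ with $x_0$ at $s=0$, one has $\vn(s) \cdot \vc{t}(x_0) = \kappa(0) s + o(s)$ as $s \to 0$, so this is essentially the requirement that $\int_{-\delta}^{\delta} p(s)\, s\, (1+o(1))\, {\rm d}s = 0$. This is one scalar linear condition on $p$; it can be met, for instance, by taking $p$ of the form $p(s) = \psi_0(s) + c\, \psi_1(s)$ with $\psi_0$ a fixed symmetric bump, $\psi_1$ a fixed odd-ish correction, and $c$ chosen to kill the tangential moment — while keeping $p \geq 0$ by taking $\delta$ small enough that the correction term is dominated by $\psi_0$. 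With the tangential component annihilated, $\vF = \int_{\partial B} p\, \vn \ {\rm d}S_x$ has only a component along $\vn(x_0)$, which is strictly positive since all the $\vn(x)$ have positive inner product with $\vn(x_0)$ on a short arc; hence $\vF$ has the direction of $\vn(x_0)$, as required. One then checks $p \in C^1(\mathbb{R}^2)$ by taking the bumps smooth and extending $p$ by zero off a slightly larger set, and $p \geq 0$, $\supp p \subset U_{x_0}$ by construction.

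The main obstacle I anticipate is the simultaneous handling of the sign constraint $p \geq 0$ and the tangential-moment constraint: a naive odd correction $\psi_1$ would be sign-changing, so I must ensure the coefficient $c$ is small (of order $\delta$, coming from the $O(\delta)$ size of the tangential components of $\vn$ on $\Gamma_\delta$) relative to the size of the symmetric part $\psi_0$, so that $p$ stays nonnegative. Strict convexity enters here in an essential way: it guarantees $\kappa(0) > 0$ so the map $s \mapsto \vn(s)\cdot\vc{t}(x_0)$ genuinely changes sign across $s=0$ and the moment condition is solvable, and it guarantees the Gauss map is a local diffeomorphism so that nothing degenerates. A minor secondary point is the regularity bookkeeping: $\partial B \in C^2$ gives $\vn \in C^1$ on $\partial B$, which is exactly what is needed for the moment integrals to behave as claimed, and choosing the bump profiles in $C_c^\infty$ makes $p \in C^1(\mathbb{R}^2)$ immediate. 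The magnitude $|\vF|$ is not prescribed by the statement, so no normalization beyond nonvanishing is needed.
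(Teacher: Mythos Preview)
Your plan is sound and follows the same overall strategy as the paper: localize near $x_0$, write $\partial B$ as a graph over the tangent line, and kill the tangential component of $\int_{\partial B} p\,\vn\,{\rm d}S_x$ so that only a positive multiple of $\vn(x_0)$ survives. The difference is in how the tangential component is annihilated. You propose an \emph{additive} correction $p=\psi_0+c\,\psi_1$ and then argue that $c$ is small enough to preserve $p\ge 0$. The paper instead uses a \emph{multiplicative} correction: with $H(x)=h'(x)/\sqrt{1+h'^2(x)}$ the tangential weight (strict convexity gives $xH(x)>0$ for $x\ne 0$), it sets
\[
p(x)=\mathcal P(x)\,\sqrt{-\frac{H(-x)}{H(x)}},
\]
where $\mathcal P$ is an even nonnegative bump. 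This makes $p(x)H(x)$ exactly odd, so the tangential integral vanishes identically, and nonnegativity is automatic since the square root is real and positive. This sidesteps entirely the delicate balance you flag as your main obstacle.

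One small inaccuracy in your write-up: strict convexity of a $C^2$ body does \emph{not} force $\kappa(x_0)>0$ at every boundary point (think of $y=x^4$). What strict convexity does give---and what both arguments actually need---is that $s\mapsto \vn(s)\cdot\vc t(x_0)$ (equivalently $H$) changes sign at $s=0$ and is nonzero for $s\ne 0$. Your smallness estimate for $c$ relies on the linear expansion $\kappa(0)s$; if $\kappa(0)=0$ you would need to redo that estimate using only the sign property of $H$. The paper's multiplicative construction avoids this issue as well, since it uses only the sign of $H$, not its leading-order behaviour.
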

\begin{proof}
There is a neighborhood of $x_0 = 0$   and a coordinate system such that there exists a function $h(x):\mathbb R\mapsto \mathbb R^+_0$, $h(0) = h'(0) = 0$, $h(x)\neq 0$ for $x\neq 0$,  $\{(x,h(x))\} = \partial B$ on the neighborhood and, moreover, $${\bf n} = \frac{(h'(x),1)}{\sqrt{1+h'^2(x)}}.$$
The choice of a coordinate system yields $\vF$ is of the form $(0,F_2)$ with $F_2>0$. In what follows, we assume $U_{x_0}$ is contained in this neighborhood and we look for  a positive function $p$ supported in $U_{x_0}$. The right hand side of \eqref{integral.p} written in coordinates has a form
$$
\left(\int_{-R}^R p(x)H(x)\ \dx, \int_{-R}^R \frac{p(x)}{\sqrt{1+h'^2(x)}}\ \dx\right),
$$
where $R$ is such that $(-R,R)\times \{0\}\subset U_{x_0}$. Here $H(x) = \frac{h'(x)}{\sqrt{1+h'^2(x)}}$. Due to assumptions, $H(x) x >0$ for all $x\neq 0$.
Note that the second integral gives some positive value, denote it by $c$. 

It is possible to find a function $p$ such that the first coordinate is $0$. Indeed, Let $\mathcal P$ denote some even bounded smooth non-negative function with support in $(-R,R)$. Let $f$ be defined as
$$
f(x) =  \sqrt{-\frac{H(-x)}{H(x)}}.
$$
Due to assumptions, $f$ can be extended continuously to $0$. The function $p(x) = \mathcal P(x)f(x)$ posses all demanded qualities. Indeed, it is bounded, continuous and, due to the definition of $f$, $p(x)H(x)$ is an odd function. Furthermore, since $\mathcal P$ and  $f H$ are at least $C^1$ functions, it follows that $p$ is also a $C^1$ function. It remains to multiply this function by $\frac{F_2}{c}$ to get the demanded value of the second coordinate.

The function $p$ is now defined on a boundary of $B$. However, it is trivial to extend it on the whole $\mathbb R^2$ such that its support remains in a neighborhood of $x_0$. 
\end{proof}

\begin{Lemma}\label{pres.rotation}
Let $B$ fulfill assumptions of the previous lemma and let it be not a ball. Then there exists a non-negative pressure $p$ such that 
\begin{equation*}
\begin{split}
\int_{\partial B} (x-x_{B})\cdot p \vc{n}^\bot \ \dx & =: \bfom\neq 0,\\
\int_{\partial B} p {\bf n}\ {\rm d}S_x &= 0.
\end{split}
\end{equation*}

\end{Lemma}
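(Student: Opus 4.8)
The goal is to produce a non-negative pressure $p$ on $\partial B$ whose net force vanishes but whose torque about the barycenter $x_B$ is non-zero. My plan is to build $p$ from the building blocks supplied by Lemma \ref{forces}. First I would fix two distinct points $x_1, x_2 \in \partial B$ and, using Lemma \ref{forces} on disjoint neighborhoods $U_{x_1}$, $U_{x_2}$, obtain non-negative $C^1$ pressures $p_1, p_2$ supported near $x_1, x_2$ with $\int_{\partial B} p_k \vc{n}\ \mathrm{d}S_x = \vF_k$, where $\vF_k$ points in the direction $\vc{n}(x_k)$. By rescaling each $p_k$ I can arrange $\vF_1 = -\vF_2$ (they need only be parallel of opposite sign; this forces a constraint on the choice of the two points, see below). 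Then $p = p_1 + p_2 \geq 0$ is $C^1$ and has zero net force, so the second identity is automatic. The remaining task is to choose $x_1, x_2$ so that the associated torque $\bfom = \int_{\partial B} (x - x_B) \cdot p\, \vc{n}^\bot\ \mathrm{d}S_x$ is non-zero.

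For the torque, note that since $p_k$ is concentrated near $x_k$, the torque contribution of $p_k$ is approximately $(x_k - x_B) \cdot \vF_k^{\,\bot}$ up to an error controlled by the diameter of $\mathrm{supp}\, p_k$, which I can shrink. So it suffices to find two points $x_1, x_2$ on $\partial B$ with antiparallel outward normals — i.e. $\vc{n}(x_1) = -\vc{n}(x_2)$ — for which the total leading torque $(x_1 - x_B)\cdot \vc{n}(x_1)^\bot + (x_2 - x_B)\cdot(-\vc{n}(x_1))^\bot = (x_1 - x_2)\cdot \vc{n}(x_1)^\bot$ does not vanish. Because $B$ is strictly convex with $C^2$ boundary, the Gauss map $x \mapsto \vc{n}(x)$ is a homeomorphism from $\partial B$ onto $S^1$, so for every direction there is exactly one pair of antipodal-normal points; this is the pair realizing the "width" of $B$ in the direction $\vc{n}(x_1)$. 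The quantity $(x_1 - x_2)\cdot \vc{n}(x_1)^\bot$ is the component of the chord $x_1 - x_2$ tangent to the boundary at $x_1$. This vanishes for a given direction precisely when the two parallel supporting lines of $B$ with that normal touch at points whose connecting chord is orthogonal to them — that is, when the chord is a "diameter" in the affine sense. If this tangential component vanished for \emph{every} direction, one checks that $B$ would be a disc centered at the common foot of these normals; since by hypothesis $B$ is not a ball, there exists a direction for which it is non-zero. Picking that direction fixes $x_1, x_2$, and after making the supports small enough the error terms cannot kill the non-zero leading torque, giving $\bfom \neq 0$.

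The main obstacle is the geometric claim that a strictly convex $C^2$ body all of whose "antipodal-normal chords" are perpendicular to their common normal direction must be a disc. The cleanest route is: such a chord perpendicular to the normals means the midpoint analysis forces the support function $h(\theta)$ (with respect to a suitable origin) to satisfy $h(\theta) = h(\theta + \pi)$ \emph{and} that the contact points are symmetric through a fixed center, which by differentiating the support-function parametrization $x(\theta) = h(\theta)\vc{n}(\theta) + h'(\theta)\vc{n}(\theta)^\bot$ yields $h'(\theta) + h'(\theta+\pi) = 0$ and $h(\theta)+h(\theta+\pi)$ constant; combined with the perpendicularity condition one deduces $h'' + h \equiv \text{const}$, i.e. $\partial B$ is a circle. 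I would relegate this to a short computation with the support function, or alternatively argue by contradiction directly: if $\bfom = 0$ for the leading term in every direction, integrate/vary the direction to force the boundary to be a circle, contradicting the assumption that $B$ is not a ball. Everything else — the $C^1$ regularity and non-negativity of $p = p_1 + p_2$, the extension off $\partial B$, and the smallness-of-support estimate on the torque error — is routine and inherited from Lemma \ref{forces}.
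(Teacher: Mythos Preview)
Your two--point construction has a genuine gap: the geometric claim that ``if the antipodal--normal chord $(x_1 - x_2)$ is perpendicular to $\vc{n}(x_1)^\bot$ for every direction then $B$ must be a disc'' is false. Writing the boundary via the support function (with origin at $x_B$) as $x(\theta) = h(\theta)\vc{n}(\theta) + h'(\theta)\vc{n}(\theta)^\bot$, one computes
\[
(x(\theta) - x(\theta+\pi))\cdot \vc{n}(\theta)^\bot \;=\; h'(\theta) + h'(\theta+\pi) \;=\; \frac{\dd}{\dd\theta}\Big(h(\theta)+h(\theta+\pi)\Big),
\]
so your vanishing condition is exactly \emph{constant width}, not constant curvature. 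There are strictly convex $C^\infty$ bodies of constant width that are not discs (e.g.\ $h(\theta)=a+b\cos 3\theta$ with $0<b$ small); for any such $B$ your leading torque $c\,[T(\vc m)+T(-\vc m)]$ vanishes for every choice of direction, and the ``smallness of support'' error cannot manufacture a definite sign. The step ``combined with the perpendicularity condition one deduces $h''+h\equiv\text{const}$'' is where the argument breaks: the perpendicularity condition \emph{is} the constant--width condition, so there is no extra relation to exploit.

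The paper's proof anticipates precisely this obstruction. It works with $T(\vc m)=(N(\vc m)-x_B)\cdot\vc m^\bot=h'(\theta)$ and first checks whether some $\vc m$ has $T(\vc m)$ and $T(-\vc m)$ of the \emph{same} sign; if so, a two--point combination (as in your plan) succeeds because both torque contributions add with the same sign rather than cancelling. When no such $\vc m$ exists --- which is exactly what happens for constant--width bodies, since then $h'(\theta+\pi)=-h'(\theta)$ --- the paper passes to a more delicate case analysis and, in the decisive sub--case, uses \emph{three} boundary points whose normals span the plane, so that a positive linear combination of the three localized pressures cancels the net force while all three torque contributions are positive. Your two--point ansatz cannot be repaired without introducing this third point (or an equivalent device), so you should either restrict your argument to non--constant--width bodies and add a separate treatment of the constant--width case, or follow the paper's sign analysis of $T$ on $S^1$.
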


\begin{proof}
We recall that $\vn:\partial B\mapsto S$ assigns a normal direction to a point of the boundary. Let $N:S\mapsto \partial B$ be its inverse -- note it is well defined due to the assumption about the convexity of the body. Furthermore, we define 
\begin{equation*}
\begin{split}
T:S&\mapsto \mathbb R\\
T:\vm &\mapsto (N(\vm) - x_B)\cdot \vm^\bot
\end{split}
\end{equation*}

Let \begin{equation}\label{ass.force}\vm\in S \mbox{ be such that }T(\vm)\mbox{ and }T(-\vm)\mbox{ have the same sign.}\end{equation}
Assume it is positive. Due to the smoothness of $\partial B$ there exist nonempty neighborhoods $U_{\vm},\ U_{-\vm}\subset \partial B$ such that $(x-x_B)\cdot \vn(x)^\bot$ is positive for $x\in U_{\vm}\cup U_{-\vm}$. According to Lemma \ref{forces} there exist (positive) pressures creating forces in direction $\vm$ and $-\vm$. By a proper linear combination with positive coefficients it is possible to construct a pressure supported in $U_{\vm}\cup U_{-\vm}$ such that the resulting force is $0$. However, due to the positivity of $(x-x_B)\cdot \vn(x)$, the resulting $\bfom$ is positive. Note that it is sufficient to have $T(\vm) \neq 0$ once we know that $(x-x_B)\cdot \vn(x)^\bot = 0$ for $x\in \partial B$ form some neighborhood of $N(\vm)$.


Assume now that $B$ is such that 
\begin{equation}\label{ass.force.2} T(\vm_0)=0\mbox{ implies }T(\vm)\neq 0\mbox{ for }\vm\in U_{\vm_0}\setminus \vm_0\mbox{ for some neighborhood }U_{\vm_0}\subset S.\end{equation}
 Further, for $\vm_0\in S$ in form $\vm = (\cos \theta_0,\sin\theta_0),\ \theta_0\in \mathbb T_{[0,2\pi)}$ we establish right and left neighborhood as follows:
$$U_{\vm_0}^+ := \{(\cos\theta,\sin\theta), \theta\in (\theta_0,\theta_0+\varepsilon)\subset\mathbb T_{[0,2\pi)}\}\mbox{ for some }\varepsilon\in(0,\pi]$$ 
and
$$U_{\vm_0}^- = \{(\cos\theta,\sin\theta), \theta\in (\theta_0-\varepsilon,\theta_0)\subset\mathbb T_{[0,2\pi)}\}\mbox{ for some }\varepsilon\in (0,\pi].$$
There exists $\vm_0$ such that $T(\vm_0)=0$ and $T(\vm)>0$ for $\vm\in U_{\vm_0}^+$ and $T(\vm)<0$ for $\vm\in U_{\vm_0}^-$.  Assume $T(-\vm_0)= 0$ and $T(\vm)<0$ for $\vm \in U_{\vm_0}^+$ and $T(\vm)>0$ for $\vm\in U_{\vm_0}^-$, otherwise there can be found $\vm$ fulfilling assumption \eqref{ass.force} and we are done. We distinguish two cases:
\begin{itemize}
\item $T(\vm)$ is non-negative in the whole arc connecting $\vm$ and $-\vm$, in other words, the neighborhood $U_{\vm_0}^+$ fulfilling $T(\vm)\geq0$ $\forall \vm \in U_{\vm_0}^+$ can be chosen in such a way that $U_{\vm_0}^+ = U_{-\vm_0}^-$. In  that case, either $\{\vm, T(\vm)>0\}\cap (S\setminus U_{\vm_0}^+)\neq \emptyset$ and we can chose $\vm$ such that \eqref{ass.force} is fulfilled, or $\{\vm, T(\vm)>0\} \cap (S\setminus U_{\vm_0}^+)=\emptyset$ -- however, this leads to a contradiction with the definition of $x_B$.
\item There exists $\vm_1\in U_{\vm_0}^+\cap U_{\vm_0}^-$ such that $T(\vm) <0$ for $\vm\in U_{\vm_1}$. If $T(\vm)<0$ for $\vm \in U_{-\vm_1}$, we are done. Otherwise, there exists at least three open sets $U^+_{\vm_0}$, $U^-_{-\vm_0}$ and $U_{-\vm_1}$ on which $T(\vm)$ is positive. We chose one vector from each of these sets, denoting them $\vm_a$, $\vm_b$, $\vm_c$. We use Lemma \ref{forces} to construct non-negative pressures which create forces in direction $\vm_a$, $\vm_b$, and $\vm_c$. Their linear combination with positive coefficients gives the resulting force equal to $0$. However, the resulting $\bfom$ is positive.
\end{itemize}

The lemma can be proven similarly even if \eqref{ass.force.2} is not fulfilled, the method of the proof is the same. The assumption just allows to keep the notation lucid. 

\end{proof}

\begin{Remark}
The pressures constructed by the above Lemmas are smooth with respect to $F$ and $\bfom$. Moreover, they are $C^1$ with respect to a position of $B$
\end{Remark}

\begin{Remark}
Let $x_0\in \partial B$. The pressure having all properties of Lemma \ref{pres.rotation} can be constructed under additional constraint $\supp p\cap U_{x_0} = \emptyset$ for some neighborhood of $x_0$. This follows easily from the smoothness of $\partial B$. Indeed, if one of the pressures from the proof of the lemma should be supported on a neighborhood of $x_0$, the pressure can be, due to the smoothness, shifted slightly aside.
\end{Remark}

\begin{Remark}
In case $B$ is a homogeneous ball, the fluid cannot affect the rotation of $B$. Indeed, the integral in \eqref{i7} is always zero.
\end{Remark}

Now we can proceed to a construction of pressure. Let the movement of a body can be explained by a force $\vF(t)$ and a torgue $\bfom(t)$. Fix $t\in [0,T]$. By Lemma \ref{forces} there can be found pressure $p_1$ such that 
$$
\int_{\partial B} p_1\vn {\rm d}S_x = \vF(t).
$$
Further, there exists $p_2$ such that $\int_{\partial B} p_2\vn {\rm d}S_x = 0$ and simultaneously 
$$
\int_{\partial B} p_2 (x-x_B)\cdot \vn^\bot {\rm d}S_x =\bfom_1\neq 0.
$$
It suffices to take $$p_3 = \left(p_1 +  \left(\frac{\bfom(t)}{\bfom_1} - \frac{\int_{\partial B} p_1(x-x_B)\cdot \vn^\bot {\rm d}S_x}{\bfom_1}\right)p_2\right)\sigma_U + p_0(t).$$
Similarly as in the previous section, $U\supset B$ is a neighborhood of $B$, $\sigma_u$ is a smooth function satisfying $\chi_B\leq \sigma_U\leq \chi_U$ and $p_0(t)$ is such that the total pressure $p_3$ is positive and the corresponding density satisfies \eqref{M2a}. 

\begin{Remark} \label{hetero.ball}The assumption that body is homogeneous cannot be omitted. Indeed, concern a body $$B = \{(x_1,x_2)\in\mathbb R^2, (x_1-1)^2 + x_2^2 \leq 4\}$$
with a center of gravity $x_B = 0$. It follows that for $x\in \partial B$ we have $\vn = \frac12 (-x_1 + 1, -x_2)$ and $x\cdot \vn^\bot = \frac{x_2}2$. Thus, for every pressure inducing a force $\vF = (0,1)$ we have
$$
1 = -\int_{\partial B} p \frac{x_2}2\ {\rm d}S_x = \int_{\partial B} p x\cdot \vn^\bot\ {\rm d}S_x = \bfom
$$
and, consequently, every force in the direction of $(0,1)$ induces also some nontrivial torque. 
\end{Remark}

\def\cprime{$'$} \def\ocirc#1{\ifmmode\setbox0=\hbox{$#1$}\dimen0=\ht0
  \advance\dimen0 by1pt\rlap{\hbox to\wd0{\hss\raise\dimen0
  \hbox{\hskip.2em$\scriptscriptstyle\circ$}\hss}}#1\else {\accent"17 #1}\fi}


\end{document}